\documentclass[11pt]{article}

\pdfoutput=1

\usepackage{a4}
\usepackage[top=30truemm,bottom=30truemm]{geometry}

\usepackage[running]{lineno}

\usepackage{authblk}
\usepackage{amsmath}
\usepackage{amsfonts}
\usepackage{cases}
\usepackage{url}
\usepackage{graphicx}
\usepackage{wrapfig}

%%%%%%%%%%%%%%%%%%%%%%%%%%%%%%%%%%%%%%%%%%%%

\newcommand{\LCS}{\mathsf{LCS}}
\newcommand{\LIS}{\mathsf{LIS}}
\newcommand{\LSS}{\mathsf{LSS}}
\newcommand{\M}{\mathcal{M}}
\newcommand{\Q}{\mathcal{Q}}
\newcommand{\EL}{\mathcal{L}}

%%%%%%%%%%%%%%%%%%%%%%%%%%%%%%%%%%%%%%%%%%%%

\newtheorem{theorem}{Theorem}
\newtheorem{lemma}{Lemma}
\newtheorem{problem}{Problem}
\newenvironment{proof}{\begin{trivlist} \item{\itshape Proof.}}{\end{trivlist}}

\newcommand{\qed}{\hfill $\Box$}

%%%%%%%%%%%%%%%%%%%%%%%%%%%%%%%%%%%%%%%%%%%%

\pagestyle{plain}

\title{Longest Square Subsequence Problem Revisited}

\author{Takafumi Inoue$^{1}$}
\author{Shunsuke~Inenaga$^{1,2}$}
\author{Hideo~Bannai$^3$}

\affil{
  \textit{$^1$ Department of Informatics, Kyushu University, Fukuoka, Japan}\\
  \textit{$^2$ PRESTO, Japan Science and Technology Agency, Kawaguchi, Japan}\\
  \texttt{inenaga@inf.kyushu-u.ac.jp}\\
  \textit{$^3$ M\&D Data Science Center, Tokyo Medical and Dental University, Tokyo, Japan}\\
  \texttt{hdbn.dsc@tmd.ac.jp}
}

\date{}

\begin{document}
\maketitle

\begin{abstract}
  The \emph{longest square subsequence} (\emph{LSS}) problem consists of
  computing a longest subsequence of a given string $S$
  that is a square, i.e., a longest subsequence of form $XX$ appearing in $S$.
  It is known that an LSS of a string $S$ of length $n$
  can be computed using $O(n^2)$ time~[Kosowski 2004],
  or with (model-dependent) polylogarithmic speed-ups
  using $O(n^2 (\log \log n)^2 / \log^2 n)$ time~[Tiskin 2013].
  We present the first algorithm for LSS
  whose running time depends on other parameters,
  i.e., we show that an LSS of $S$ can be computed in
  $O(r \min\{n, M\}\log \frac{n}{r} + n + M \log n)$ time with $O(M)$ space,
  where $r$ is the length of an LSS of $S$ and
  $M$ is the number of matching points on $S$.
\end{abstract}

\section{Introduction}

Subsequences of a string $S$ with some interesting properties
have caught much attention in mathematics and algorithmics.
The most well-known of such kinds is the \emph{longest increasing subsequence}
(\emph{LIS}), which is a longest subsequence of $S$ whose elements appear
in lexicographically increasing order.
It is well known that an LIS of a given string $S$ of length $n$ can be computed
in $O(n \log n)$ time with $O(n)$ space~\cite{Schensted61}.
Other examples are the \emph{longest palindromic subsequence} (\emph{LPS})
and the \emph{longest square subsequence} (\emph{LSS}).
Since an LPS of $S$ is a \emph{longest common subsequence} (\emph{LCS})
of $S$ and its reversal,
an LPS can be computed by a classical dynamic programming for LCS,
or by any other LCS algorithms.

Computing an LSS of a string is not as easy,
because a reduction from LSS to LCS essentially requires to consider
$n-1$ partition points on $S$.
Kosowski~\cite{Kosowski04} was the first to tackle this problem,
and showed an $O(n^2)$-time $O(n)$-space LSS algorithm.
Computing LSS can be motivated by e.g. finding an optimal partition point
on a given string so that the corresponding prefix and suffix 
are most similar.
Later, Tiskin~\cite{Tiskin13} presented a (model-dependent)
$O(n^2 (\log \log n)^2 / \log^2 n)$-time LSS algorithm,
based on his semi-local string comparison technique
applied to the $n-1$ partition points (i.e. $n-1$ pairs of prefixes and suffixes.)
Since strongly sub-quadratic $O(n^{2-\epsilon})$-time LSS algorithms
do not exist for any $\epsilon > 0$ unless the SETH is false~\cite{BringmannK15},
the aforementioned solutions are almost optimal in terms of $n$.

In this paper, we present the first LSS algorithm
whose running time depends on other parameters,
i.e., we show that an LSS of $S$ can be computed in
$O(r \min\{n, M\} \log \frac{n}{r} + n + M \log n)$ time
with $O(M)$ space, where $r$ is the length of an LSS of $S$ and
$M$ is the number of matching points on $S$.
This algorithm outperforms Tiskin's $O(n^2 (\log \log n)^2 / \log^2 n)$-time algorithm when $r = o(n (\log \log n)^2 / \log^3 n)$
and $M = o(n^2 (\log \log n)^2 / \log^3 n)$.

Our algorithm is based on a reduction from computing an LCS of two strings
of total length $n$ to computing an LIS of an integer sequence of length at most $M$, where $M$ is roughly $n^2 / \sigma$ for 
uniformly distributed random strings over alphabets of size $\sigma$.
We then use a slightly modified version of the dynamic LIS algorithm~\cite{ChenCP13} for our LIS instances that dynamically change
over $n-1$ partition points on $S$.
A similar but more involved reduction from LCS
to LIS is recently used
in an intermediate step of a reduction from dynamic time warping (DTW) to LIS~\cite{SakaiI20}.
We emphasize that our reduction (as well as the one in~\cite{SakaiI20}) from
LCS to LIS should not be confused with a well-known folklore reduction
from LIS to LCS.

Independently to this work,
Russo and Francisco~\cite{RussoF2020} showed a very similar algorithm to
solve the LSS problem, which is also based on a reduction to LIS.
Their algorithm runs in $O(r \min\{n, M\} \log \min\{r,\frac{n}{r}\} + rn + M)$ time and $O(M)$ space.

\section{Preliminaries}\label{preliminaries}

Let $\Sigma$ be an alphabet.
An element $S$ of $\Sigma^*$ is called a string.
The length of a string $S$ is denoted by $|S|$.
For any $1 \leq i \leq |S|$,
$S[i]$ denotes the $i$th character of $S$.
For any $1 \leq i \leq j \leq |S|$,
$S[i..j]$ denotes the substring of $X$ beginning at position $i$
and ending at position $j$.

A string $X$ is said to be a \emph{subsequence} of a string $S$
if there exists a sequence $1 \leq i_1 < \cdots < i_{|X|} \leq |S|$
of increasing positions of $S$ such that $X = S[i_1] \cdots S[i_{|X|}]$.
Such a sequence $i_1, \ldots, i_{|X|}$
of positions in $S$ is said to be an \emph{occurrence} of $X$ in $S$.

A non-empty string $Y$ of form $XX$ is called a \emph{square}.
A square $Y$ is called a \emph{square subsequence} of a string $S$
if square $Y$ is a subsequence of $S$.
Let $\LSS(S)$ denote
the length of a \emph{longest square subsequence}
(\emph{LSS}) of string $S$.
This paper deals with the problem of computing
$\LSS(S)$ for a given string $S$ of length $n$.

For strings $A, B$,
let $\LCS(A,B)$ denote the length of the \emph{longest common subsequence}
(\emph{LCS}) of $A$ and $B$.
For a sequence $T$ of numbers,
a subsequence $X$ of $T$ is said to be an \emph{increasing subsequence} of $T$
if $X[i] < X[i+1]$ for $1 \leq i < |X|$.
Let $\LIS(T)$ denote the length of the \emph{longest increasing subsequence}
(\emph{LIS}) of $T$.

A pair $(i, j)$ of positions $1 \leq i < j \leq |S|$
is said to be a \emph{matching point} if $S[i] = S[j]$.
The set of all matching points of $S$ is denoted by
$\M(S)$, namely,
$\M(S) = \{ (i,j) \mid 1 \leq i < j \leq |S|, S[i] = S[j] \}$.
Let $M = |\M(S)|$.

\section{Algorithm}

We begin with a simple folklore reduction of
computing $\LSS(S)$ to computing the LCS of
$n-1$ pairs of the prefix and the suffix of $S$.

\begin{lemma}[\cite{Kosowski04}] \label{lem:LSS_to_LCS}
  $\LSS(S) = 2\max_{1 \leq p < n}\LCS(S[1..p], S[p+1..n])$.
\end{lemma}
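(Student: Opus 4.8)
The plan is to establish the claimed identity by proving the two inequalities $\LSS(S) \geq 2\max_{1 \leq p < n}\LCS(S[1..p], S[p+1..n])$ and $\LSS(S) \leq 2\max_{1 \leq p < n}\LCS(S[1..p], S[p+1..n])$ separately; together they give the equality. Both directions rest on the same elementary correspondence: a square subsequence of the form $XX$ in $S$ is essentially the same object as a string $X$ that occurs both in a prefix and in the complementary suffix of $S$, the two occurrences being separated by a partition point. The only real content is to make this correspondence precise and to identify the correct partition point in each direction. I expect the first inequality to be routine and the second to carry the main (still modest) subtlety.

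For the $\geq$ direction, I would fix a partition point $p$ attaining the maximum on the right-hand side and let $X$ be a longest common subsequence of $S[1..p]$ and $S[p+1..n]$, so that $|X| = \LCS(S[1..p], S[p+1..n])$. By definition of LCS, $X$ occurs as a subsequence of $S[1..p]$ and, independently, as a subsequence of $S[p+1..n]$. Since every position in $S[1..p]$ is strictly smaller than every position in $S[p+1..n]$, concatenating the two occurrences yields a single increasing sequence of positions in $S$ that spells $XX$. Hence $XX$ is a square subsequence of $S$ of length $2|X|$, which gives $\LSS(S) \geq 2|X|$, as required.

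For the $\leq$ direction, I would start from a longest square subsequence of $S$, written $YY$ with $|YY| = \LSS(S)$, and fix an occurrence $i_1 < \cdots < i_{2|Y|}$ of $YY$ in $S$. The first $|Y|$ positions spell $Y$ and the last $|Y|$ positions also spell $Y$. The key step is to choose the partition point $p = i_{|Y|}$: because $i_{|Y|} < i_{|Y|+1} \leq n$ we have $1 \leq p < n$, the first half of the occurrence lies entirely within $S[1..p]$, and the second half lies entirely within $S[p+1..n]$. Therefore $Y$ is a common subsequence of $S[1..p]$ and $S[p+1..n]$, so $\LCS(S[1..p], S[p+1..n]) \geq |Y| = \LSS(S)/2$, and taking the maximum over $p$ completes the inequality.

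The main obstacle --- really the only point needing care --- is the selection of $p$ in the second direction: the partition must fall strictly between the two halves of a fixed optimal occurrence, and one must verify the boundary condition $1 \leq p < n$. The degenerate case $\LSS(S) = 0$ (no matching point, hence no square subsequence) is handled trivially, since then every prefix--suffix LCS is also $0$ and both sides vanish.
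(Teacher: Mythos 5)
The paper states this lemma without proof, citing Kosowski, so there is no in-paper argument to compare against; your proof is the standard folklore argument one would expect. Both directions are correct --- the concatenation of the two occurrences in the $\geq$ direction, the choice $p = i_{|Y|}$ with the boundary check $i_{|Y|} < i_{|Y|+1} \leq n$ in the $\leq$ direction, and the degenerate case $\LSS(S)=0$ are all handled properly --- so the proposal is a complete and correct proof of the cited lemma.
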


%\begin{proof}
%  Let $Y$ be any square subsequence of $S$.
%  Consider an occurrence $i_1, \ldots, i_{|Y|}$
%  of $Y$ in $S$ with $Y[\frac{|Y|}{2}] = S[i_{\frac{|Y|}{2}}]$ and
%  $Y[\frac{|Y|}{2}+1] = S[i_{\frac{|Y|}{2}+1}]$.
%  Then, since $Y$ is a square of form $XX$,
%  $X$ is a common subsequence
%  of the prefix $S[1..p_{\frac{m}{2}}]$ and the suffix $S[p_{\frac{m}{2}}+1..n]$%.
%  Thus for any square subsequence $Y = XX$ of $S$,
%  there exists at least one position $p$ in $S$ such that
%  $X$ is a common subsequence of $S[1..p]$ and $S[p+1..n]$.
%  Conversely, it is clear that for any $1 \leq p < n$
%  a common subsequence of $S[1..p]$ and $S[p+1..n]$ is
%  a square subsequence of $S$.
%  Thus we have $\LSS(S) = 2\max_{1 \leq p < n}\LCS(S[1..p], S[p+1..n])$.
%  \qed
%\end{proof}

Following Lemma~\ref{lem:LSS_to_LCS},
one can use the \emph{decremental} LCS algorithm by Kim and Park~\cite{kim_park_jda2004} for computing $\LSS(S)$.
Given two strings $A$ and $B$ of length $n$,
Kim and Park proposed how to update, in $O(n)$ time,
an $O(n^2)$-space representation
for the dynamic programming table for $\LCS(A,B)$
when the leftmost character is deleted from $B$.
Since their algorithm also allows to append a character to $A$ in $O(n)$ time,
it turns out that $\LSS(S)$ can be computed in $O(n^2)$ time and space.
Kosowski~\cite{Kosowski04} presented an $O(n^2)$-time $\Theta(n)$-space
algorithm for computing $\LSS(S)$,
which can be seen as a space-efficient version of
an application of Kim and Park's algorithm to this specific problem
of computing $\LSS(S)$.
Tiskin~\cite{Tiskin13} also considered the problem of computing $\LSS(S)$,
and showed that using his semi-local LCS method,
$\LSS(S)$ can be computed in $O(n^2 (\log \log n)^2 / \log^2 n)$ time.
We remark that the log-shaving factor is model-dependent
(i.e., Tiskin's method uses the so-called ``Four-Russian'' technique).

Let $A = S[1..p]$, $A' = S[1..p+1]$,
$B = S[p+1..n]$ and $B' = S[p+2..n]$.
For ease of explanations, suppose that the indices on $B$ and $B'$
begin with $p+1$ and $p+2$, respectively.
Next, we further reduce computing $\LCS(A',B')$ from (a representation of)
$\LCS(A,B)$,
to computing an LIS of a dynamic integer sequence of length at most $M = |\M(S)|$.

For any integer pairs $(u,v)$ and $(x,y)$,
let $(u,v) \prec (x, y)$ if (i) $u < x$, or (ii) $u = x$ and $v < y$.
Consider the following integer sequence $T$:
Let $\mathcal{P}$ be the set of integer pairs $(i,n-j)$ such that
$S[i] = A[i] = B[p+j] = B[|A|+j] = S[j]$.
Then, we set $T[q] = j$ iff the integer pair $(i, n-j)$ is of rank $q$
in $\mathcal{P}$ w.r.t. $\prec$.
See Figure~\ref{fig:LCStoLIS} for an example.
Intuitively, $T$ is an integer sequence representation
of the (transposed) matching points between $A$ and $B$,
obtained by scanning the matching points between $A$ and $B$
from the bottom row to the top row,
where each row is scanned from right to left.
Clearly, the length of the integer sequence $T$ is bounded by $M$.

\begin{figure}[tb]
  \centering
  \includegraphics[width=0.80\textwidth]{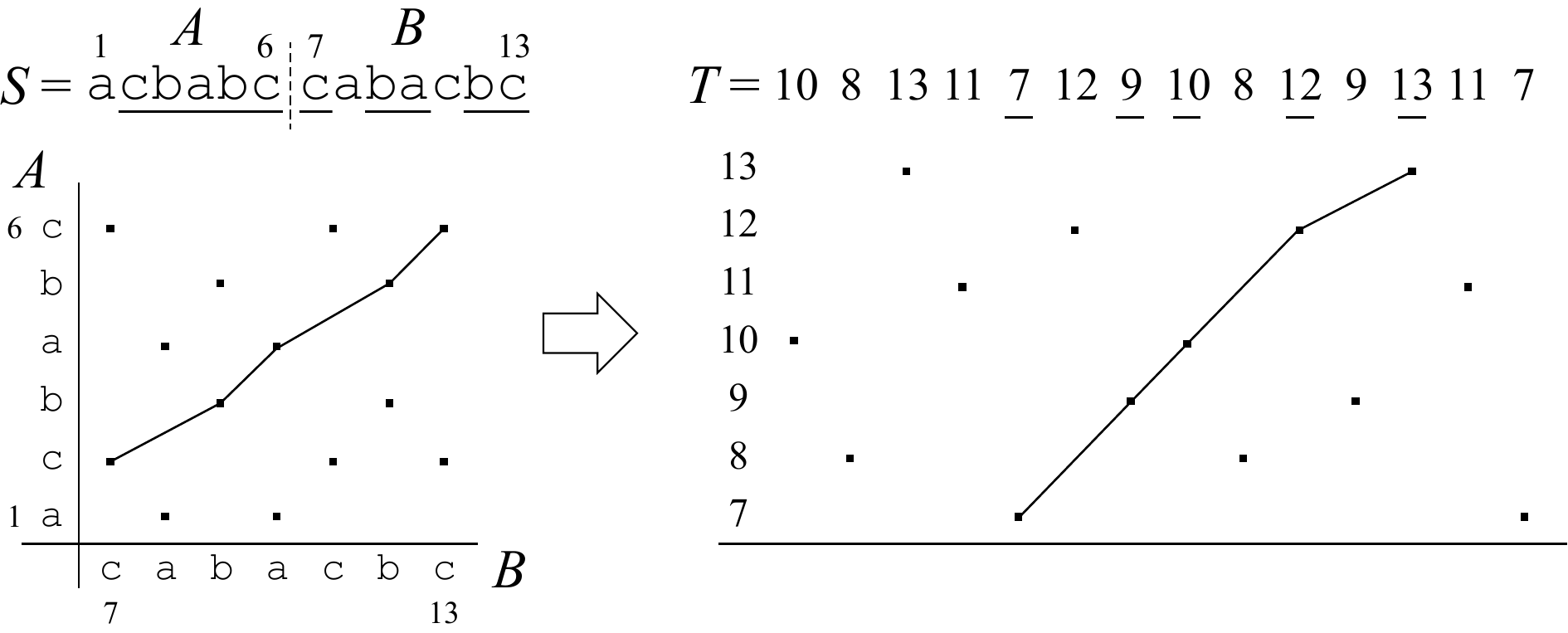}
  \caption{Correspondence between an LCS of $A = \mathtt{acbabc}$, $B = \mathtt{cabacbc}$ and an LIS of $T$.}
  \label{fig:LCStoLIS}
\end{figure}

\begin{lemma} \label{lem:LCS_to_LIS}
    Any common subsequence of $A$ and $B$ corresponds to
    an increasing subsequence of $T$ of the same length.
    Also, any increasing subsequence of $T$ corresponds to
    a common subsequence of $A$ and $B$ of the same length.
\end{lemma}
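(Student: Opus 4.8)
The plan is to exhibit an explicit length-preserving correspondence between chains of matching points and increasing subsequences of $T$, and to check that it works in both directions. First I would reformulate a common subsequence of $A$ and $B$ in geometric terms: it is exactly a sequence of matching points $(i_1,j_1),\dots,(i_\ell,j_\ell)$ (with $A[i_k]=B[j_k]$) that is strictly increasing in both coordinates, i.e.\ $i_1<\dots<i_\ell$ and $j_1<\dots<j_\ell$. Each such point is one of the entries of $\mathcal{P}$, and under the construction of $T$ it occupies some position $q_k$ with value $T[q_k]=j_k$. So the whole statement reduces to understanding when a set of matching points, read in the $\prec$-order used to build $T$, forms an increasing subsequence.

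For the forward direction I would start from a common subsequence, i.e.\ a chain with $i_1<\dots<i_\ell$ and $j_1<\dots<j_\ell$. Since the $i$-coordinates strictly increase, the defining pairs satisfy $(i_k,n-j_k)\prec(i_{k+1},n-j_{k+1})$ by clause (i) of $\prec$ alone, so the corresponding positions in $T$ satisfy $q_1<\dots<q_\ell$; and the recorded values $T[q_k]=j_k$ are strictly increasing by hypothesis. Hence the chain maps to an increasing subsequence of $T$ of the same length.

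The converse is where the real content lies, and I expect it to be the main obstacle. I would take an increasing subsequence of $T$, say at positions $q_1<\dots<q_\ell$ with values $T[q_1]<\dots<T[q_\ell]$, and let $(i_k,n-j_k)$ be the corresponding pairs, so that $j_k=T[q_k]$ and the $j_k$ are strictly increasing, which already supplies the second coordinate of the sought chain. It then remains to show $i_1<\dots<i_\ell$. From $q_k<q_{k+1}$ we get $(i_k,n-j_k)\prec(i_{k+1},n-j_{k+1})$, so either $i_k<i_{k+1}$, or $i_k=i_{k+1}$ and $n-j_k<n-j_{k+1}$; the latter means $j_k>j_{k+1}$, contradicting $j_k<j_{k+1}$. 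Thus $i_k=i_{k+1}$ is impossible, forcing $i_k<i_{k+1}$, so the selected points form a valid chain and hence a common subsequence of length $\ell$.

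The one subtlety I would flag explicitly is the role of the ``right-to-left'' tie-breaking within a row (equivalently, ordering the second coordinate as $n-j$ rather than $j$): it is precisely what guarantees that two matching points sharing a row $i$ appear in $T$ with $j$ decreasing, so they can never both occur in an increasing subsequence. This enforces the ``at most one matching point per row'' condition that distinguishes chains from arbitrary position-increasing selections, and it is the step that makes the correspondence exact rather than merely one-directional. I would be careful to treat this equal-$i$ case in full, since an ascending tie-break would silently break the converse.
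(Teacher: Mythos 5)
Your proposal is correct and follows essentially the same argument as the paper: the forward direction uses clause (i) of $\prec$ to get increasing ranks together with the increasing $j$-values, and the converse rules out the tie case $i_k = i_{k+1}$ because it would force $n-j_k < n-j_{k+1}$, contradicting $j_k < j_{k+1}$. Your explicit remark about the role of the $n-j$ (right-to-left) tie-breaking is a nice clarification of why the correspondence is exact, but it is the same mechanism the paper's proof relies on.
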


\begin{proof}
  For any common subsequence $C$ of $A$ and $B$,
  let $i_1 < \cdots < i_{|C|}$ and $j_1 < \cdots < j_{|C|}$
  be occurrences of $C$ in $A$ and $B$, respectively.
  For any $1 \leq k < |C|$,
  let $q_{k}$ and $q_{k+1}$ be the ranks of integer pairs
  $(i_{k}, n-j_k)$ and $(i_{k+1}, n-j_{k+1})$ in the set $\mathcal{P}$ w.r.t. $\prec$.
  By the definition of $T$,
  $q_{k} < q_{k+1}$ and $T[q_k] < T[q_{k+1}]$ hold.
  Hence, $C$ corresponds to an increasing subsequence of $T$ of the same length.

  For any increasing subsequence $I$ in $T$,
  let $t_1 < \cdots < t_{|I|}$ be an occurrence of $I$ in $T$.
  For any $1 \leq k < |I|$,
  let $(i_k, n-j_k)$ and $(i_{k+1}, n-j_{k+1})$ be the integer pairs
  corresponding to $I[k] = T[t_k]$ and $I[k+1] = T[i_{k+1}]$, respectively.
  Since $j_k = T[t_k] < T[t_{k+1}] = j_{k+1}$, we have
  \begin{equation}
    n-j_{k+1} < n-j_k. \label{eq:n-j}
  \end{equation}
  Since $(i_k, n-j_k) \prec (i_{k+1}, n-j_{k+1})$,
  either (i) $i_k < i_{k+1}$ or
  (ii) $i_k = i_{k+1}$ and $n-j_k < n-j_{k+1}$ must hold.
  By inequality~(\ref{eq:n-j}),
  (ii) cannot hold, and thus (i) holds.
  Hence $A[i_k]A[i_{k+1}] = B[j_k]B[j_{j+1}]$ is a common subsequence
  of $A$ and $B$.
  Hence, $I$ corresponds to a common subsequence of $A$ and $B$ of the same length.
  \qed
\end{proof}
By Lemma~\ref{lem:LCS_to_LIS},
computing $\LCS(A,B)$ can be reduced to computing $\LIS(T)$.

Let $T'$ be the integer sequence for $A'$ and $B'$
defined analogously to $T$ for $A$ and $B$.
Now the task is how to compute $\LIS(T')$
from (a data structure that represents) $\LIS(T)$.
See Figure~\ref{fig:slide_LCStoLIS} for an example.
Observe that when the leftmost character is deleted from $B$ (upper part of Figure~\ref{fig:slide_LCStoLIS}),
then the lowest points are deleted from the 2D plane, and thus
all the elements with minimum value are deleted from $T$.
Also,
when the leftmost character of $B$ is appended to $A$
(upper part of Figure~\ref{fig:slide_LCStoLIS}),
which gives us $A' = S[1..p+1]$,
then a new point for every $j$ with $A'[|A'|] = B'[j]$ is inserted to
the right end of the 2D plane in decreasing order of $j$,
and thus $j$ is appended to the right end of $T$ in decreasing order of $j$,
one by one.
Thus, computing $\LCS(A',B')$ from $\LCS(A,B)$
reduces to the following sub-problem:

\begin{figure}[t]
  \centering
  \includegraphics[width=0.85\textwidth]{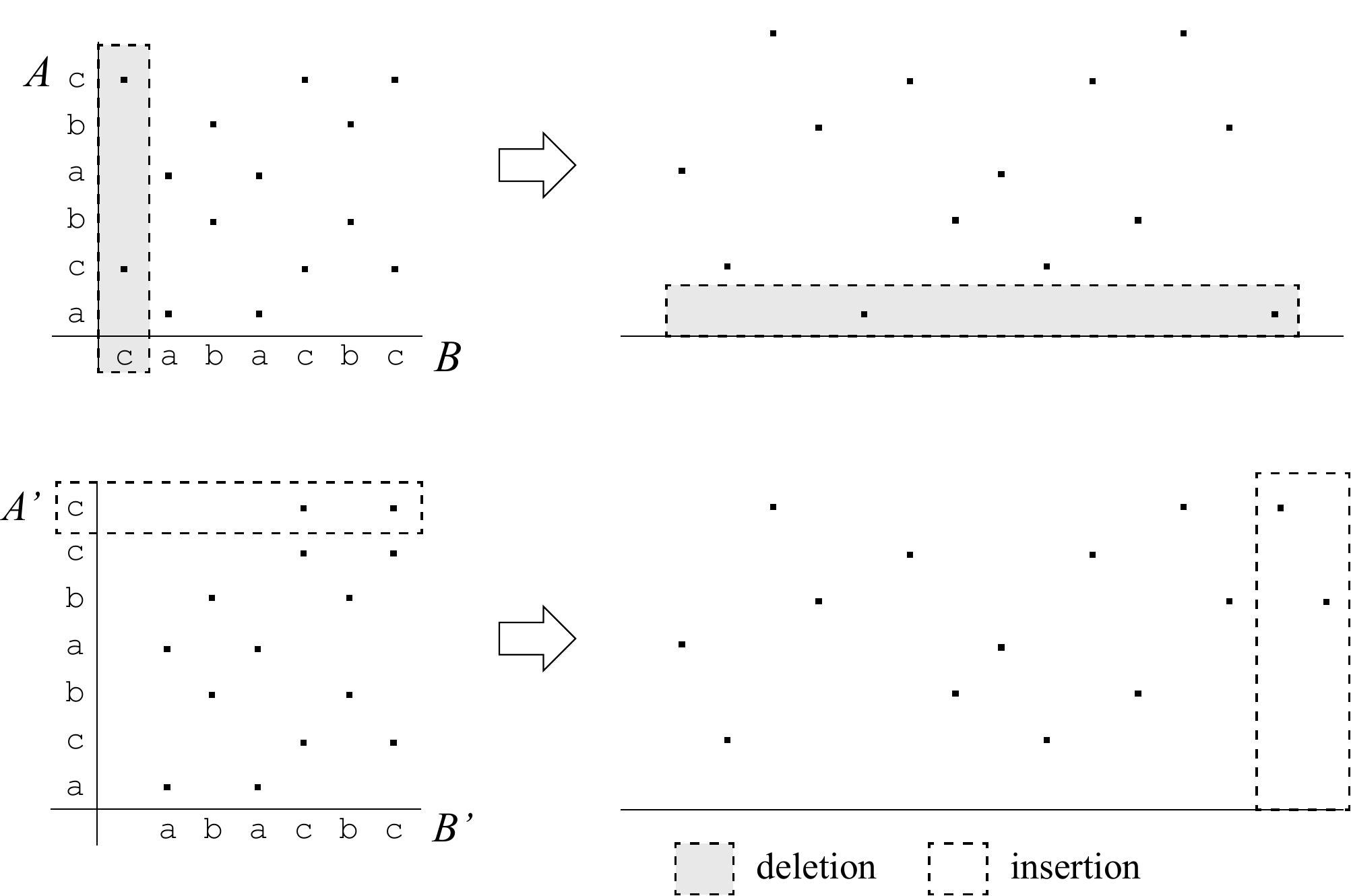}
  \caption{Illustration on how points in the 2D plane
    (and elements in $T$)
    are to be deleted or inserted
    when $A$ and $B$
    are updated to $A'$ and $B'$, respectively.}
  \label{fig:slide_LCStoLIS}
\end{figure}

\begin{problem} \label{prob:dynamic_LIS}
For a dynamic integer sequence $T$,
maintain a data structure that supports the following operations
and queries:
\begin{itemize}
  \item Insertion: Insert a new element to the right-end of $T$;
  \item Batched Deletion: Delete all the elements with minimum value from $T$;
  \item Query: Return $\LIS(T)$.
\end{itemize}
\end{problem}

We can use Chen et al.'s algorithm~\cite{ChenCP13}
for insertions.
Let $\ell = \LIS(T)$.
Their algorithm supports insertions at the right-end of $T$
in $O(\log |T|)$ time each.
Since $|T| \leq M \leq n^2$,
insertions at the right-end can be done in $O(\log n)$ time.

Next, let us consider batched deletions.
Chen et al.~\cite{ChenCP13} showed that
an insertion or deletion of a single element at an arbitrary position of $T$
can be supported in $O(\ell \log \frac{|T|}{\ell}) \subseteq O(\ell \log \frac{n}{\ell})$ time each.
However, since our batched deletion may contain
$O(|T|) \subseteq O(M)$ characters in the worst case,
a na\"ive application of a single-element deletion only leads to
an inefficient $O(\ell |T| \log \frac{n}{\ell}) \subseteq O(\ell M \log \frac{n}{\ell})$ batched deletion.
In what follows, we show how to support batched deletions
in $O(\ell \log \frac{n}{\ell})$ time each,
using Chen et al.'s data structure.

For any position $1 \leq t \leq |T|$ in sequence $T$, let $l(t)$
denote the length of an LIS of $T[1..t]$ that has an occurrence
$i_1 < \cdots < i_{l(t)} = t$,
namely, an occurrence that ends at position $t$ in $T$.
The following observations are immediate:

\begin{lemma}[\cite{ChenCP13}]
  Let $q$ be the second to last position of any occurrence of a
  length-$l(t)$ LIS of $T[1..t]$ ending at position $t$.
  Then, $l(q) = l(t)-1$.
\end{lemma}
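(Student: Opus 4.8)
The plan is to prove the equality $l(q) = l(t) - 1$ by establishing the two inequalities $l(q) \geq l(t) - 1$ and $l(q) \leq l(t) - 1$ separately. Throughout I would write the fixed occurrence as $i_1 < \cdots < i_{l(t)} = t$, so that by definition $q = i_{l(t)-1}$, and I would keep in mind that $l(\cdot)$ counts only increasing subsequences that \emph{end at the prescribed position}.

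For the lower bound I would simply truncate the given occurrence. Dropping its last position $t$ leaves $i_1 < \cdots < i_{l(t)-1} = q$, which is an increasing subsequence of $T[1..q]$ of length $l(t)-1$ whose occurrence ends at position $q$. By the definition of $l(q)$ as the maximum length of such a subsequence, this gives $l(q) \geq l(t) - 1$ at once.

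For the upper bound I would argue by contradiction: suppose $l(q) \geq l(t)$, so there is an increasing subsequence $j_1 < \cdots < j_{l(q)} = q$ of $T[1..q]$ ending at $q$ with $l(q) \geq l(t)$. The crucial point is that $T[q] < T[t]$, which holds because $q$ and $t$ are consecutive positions of an increasing subsequence, hence $T[i_{l(t)-1}] < T[i_{l(t)}]$. Since $q < t$ and $T[j_{l(q)}] = T[q] < T[t]$, I can extend this subsequence by appending position $t$, producing an increasing subsequence of $T[1..t]$ ending at $t$ of length $l(q)+1 \geq l(t)+1$. This contradicts the maximality of $l(t)$, so $l(q) \leq l(t)-1$, and combining the two bounds gives $l(q) = l(t)-1$.

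The whole argument is a routine truncation/extension (exchange) argument, so there is no substantial obstacle; the only point requiring care is the strict inequality $T[q] < T[t]$, since it is precisely this adjacency condition that licenses appending position $t$ to an arbitrary increasing subsequence ending at $q$. I would make sure to derive it explicitly from the fact that $q$ and $t$ are neighbours in the original increasing subsequence, rather than taking it for granted.
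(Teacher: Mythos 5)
Your proof is correct: the truncation argument gives $l(q)\ge l(t)-1$, the extension argument (using the strict inequality $T[q]<T[t]$, which you rightly derive from the adjacency of $q$ and $t$ in the fixed occurrence) gives $l(q)\le l(t)-1$, and together they yield the claim. The paper itself offers no proof---it labels this an ``immediate'' observation cited from Chen et al.---so there is nothing to compare against; your argument is the standard one that the authors are implicitly relying on.
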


\begin{lemma}[\cite{ChenCP13}] \label{lem:sorted_reversed}
  If $q < t$ and $l(q) = l(t)$, then $T[q] \geq T[t]$.
\end{lemma}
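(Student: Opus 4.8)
The plan is to proceed by contradiction, exploiting the fact that the quantity $l(t)$ is monotone under the natural one-step extension of an increasing subsequence. Concretely, I would assume the negation of the conclusion, i.e.\ that $q < t$ and $l(q) = l(t)$ hold but nevertheless $T[q] < T[t]$, and then build an increasing subsequence ending at position $t$ that is strictly longer than $l(t)$ permits.

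First I would fix a witness realizing $l(q)$: by the definition of $l(q)$ there is an occurrence $i_1 < \cdots < i_{l(q)} = q$ of a length-$l(q)$ increasing subsequence of $T[1..q]$ that ends at position $q$. The key observation is that position $t$ can be appended to this witness. The position requirement holds because $i_{l(q)} = q < t$, and the value requirement holds because, under the assumption $T[q] < T[t]$, we have $T[i_{l(q)}] = T[q] < T[t]$. Since the paper defines an increasing subsequence through the \emph{strict} inequality $X[i] < X[i+1]$, these are exactly the two conditions needed for $i_1 < \cdots < i_{l(q)} < t$ to be a valid increasing subsequence of $T[1..t]$ ending at position $t$. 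This extended sequence has length $l(q)+1$, so by the maximality in the definition of $l(t)$ we obtain $l(t) \geq l(q)+1 > l(q)$, contradicting $l(q) = l(t)$. Hence $T[q] \geq T[t]$.

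I do not anticipate a substantive obstacle; the proof is a single extension argument. The only point meriting care is the strict/non-strict bookkeeping: the negation of the conclusion $T[q] \geq T[t]$ is precisely the strict inequality $T[q] < T[t]$, which is in turn exactly what the strict definition of an increasing subsequence requires in order to legally append $t$. Because these two conventions match, the argument closes with no boundary cases to check, and it does not even need the preceding lemma on second-to-last positions.
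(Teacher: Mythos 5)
Your proof is correct: the paper states this lemma as an immediate observation cited from Chen et al.\ and gives no proof, and your one-step extension argument by contradiction is exactly the standard justification one would supply. The strict/non-strict bookkeeping you flag is handled correctly, and no further care is needed.
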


\begin{figure}[tb]
  \centering
  \includegraphics[scale=0.65]{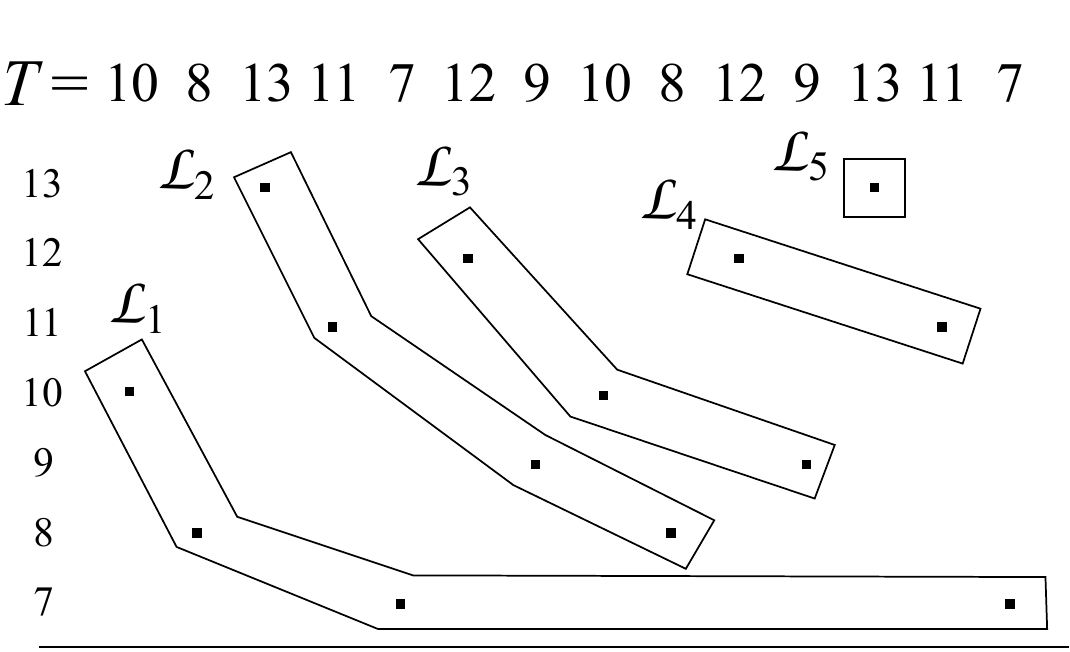}
  \caption{Lists $\mathcal{L}_k$ for pairs $\langle t, T[t] \rangle$.}
  \label{fig:Lk}
\end{figure}

For any $1 \leq k \leq \ell$,
let $\EL_k$ be a list of pairs $\langle t, T[t] \rangle$
such that $l(t) = k$, sorted in increasing order of the first elements $t$.
See Figure~\ref{fig:Lk} for an example.
It follows from Lemma~\ref{lem:sorted_reversed} that
this list is also sorted in non-increasing order of the second elements $T[t]$.
It is clear that $\LIS(T) = \max \{k \mid \EL_k \neq \emptyset\}$.
It is also clear that for any $k > 1$, if $\EL_{k} \neq \emptyset$, then $\EL_{k-1} \neq \emptyset$.
Thus, our task is to maintain a collection of the non-empty lists
$\EL_1$, \ldots, $\EL_{\ell}$ that are subject to change
when $T$ is updated to $T'$.
As in~\cite{ChenCP13}, we maintain each $\EL_k$
by a balanced binary search tree such as red-black trees~\cite{GuibasS78}
or AVL trees~\cite{AVL}.

The following simple claim is a key to our batched deletion algorithm:
\begin{lemma} \label{lem:delete_smallest}
  The pairs having the elements of minimum value in $T$ are at the tail of $\EL_1$.
\end{lemma}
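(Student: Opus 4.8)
The plan is to prove two things: first, that every position $t$ with $T[t]$ equal to the minimum value of $T$ satisfies $l(t) = 1$, so that all such pairs live in $\EL_1$; and second, that within $\EL_1$ these pairs sit at the tail. The first part is where the definition of $l(t)$ does all the work. If $T[t]$ is a minimum value of $T$, then no position carries a value strictly smaller than $T[t]$, so no strictly increasing subsequence of length two or more can end at position $t$. Hence the only increasing subsequence of $T[1..t]$ ending at $t$ is the singleton $T[t]$ itself, giving $l(t) = 1$ and placing $\langle t, T[t]\rangle$ in $\EL_1$.

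For the second part I would invoke Lemma~\ref{lem:sorted_reversed}, which tells us that $\EL_1$, when read in increasing order of the first component $t$, has non-increasing second components $T[t]$. Since the pairs of minimum value carry the smallest possible second component, they can only appear where the non-increasing sequence of values bottoms out, i.e.\ at the tail of $\EL_1$.

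The one point that deserves a careful word is ruling out a pair of larger value appearing \emph{after} a minimum-value pair in $\EL_1$. I would argue this directly: suppose $t_1 < t_2$ are both in $\EL_1$ with $T[t_1]$ equal to the minimum value. The condition $l(t_2) = 1$ means there is no $i < t_2$ with $T[i] < T[t_2]$; in particular $T[t_1] \geq T[t_2]$. But $T[t_1]$ is the global minimum, so $T[t_2] \geq T[t_1]$ as well, forcing $T[t_2] = T[t_1]$. Thus every entry of $\EL_1$ following a minimum-value entry is itself a minimum-value entry, so these pairs form exactly the tail of the list. I do not expect any real obstacle here; the only subtlety is making sure the ``at the tail'' phrasing is justified by combining the non-increasing order of Lemma~\ref{lem:sorted_reversed} with the observation that $\EL_1$ consists precisely of the left-to-right minima of $T$.
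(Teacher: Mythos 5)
Your proof is correct and follows essentially the same route as the paper, which simply observes that $\EL_1$ is sorted in non-increasing order of the second components (Lemma~\ref{lem:sorted_reversed}) and concludes the minimum-value pairs must sit at the tail. You additionally spell out the step the paper leaves implicit---that a minimum-value element $T[t]$ necessarily has $l(t)=1$ and hence belongs to $\EL_1$ at all---which is a worthwhile clarification but not a different argument.
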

\begin{proof}
  Since the list $\EL_1$ is sorted in non-increasing order of the second elements,
  the claim clearly holds.
  \qed
\end{proof}

\begin{lemma} \label{lem:batched_deletions}
  We can perform a batched deletion of all elements of $T$ with minimum value
  in $O(\ell \log \frac{n}{\ell})$ time, where $\ell = \LIS(T)$.
\end{lemma}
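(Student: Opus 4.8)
The plan is to realize the batched deletion as a single bottom-up sweep over the nonempty lists $\EL_1, \ldots, \EL_\ell$, performing only a constant number of balanced-tree operations per list. The engine driving this is the observation that deleting all minimum-value elements lowers the value $l(t)$ of any surviving position by at most one. Indeed, fix a surviving position $t$ and an increasing subsequence $T[i_1] < \cdots < T[i_{l(t)}] = T[t]$ of $T[1..t]$ ending at $t$ of length $l(t)$; since the values strictly increase, only its first element $i_1$ can carry the global minimum value, so the deletion destroys at most one element of this subsequence. Hence the new value $l'(t)$ satisfies $l'(t) \in \{l(t)-1, l(t)\}$. Consequently every surviving position either keeps its list or drops to exactly the next list below, and the whole batched deletion is implemented by one pass that moves a block of elements from each $\EL_{k+1}$ into $\EL_k$.

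The structural heart of the argument is to show that at every level this moved block is a contiguous tail of $\EL_{k+1}$ that attaches cleanly to the tail of the updated list below. By Lemma~\ref{lem:delete_smallest} the minimum-value elements already sit at the tail of $\EL_1$, so they are split off by one tree operation. For the inductive step I would prove that a position $t$ with $l(t) = k+1$ drops to level $k$ precisely when it loses all of its \emph{support}, i.e.\ when no surviving position $q < t$ with $l'(q) = k$ and $T[q] < T[t]$ remains; using that the updated list $\EL'_k$ is sorted (positions increasing, values non-increasing, by Lemma~\ref{lem:sorted_reversed} applied to the post-deletion sequence) this support condition is monotone in $T[t]$, so the dropping positions of $\EL_{k+1}$ are exactly those whose value does not exceed a threshold read off from $\EL'_k$ — that is, a contiguous tail of $\EL_{k+1}$.

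An analogous support argument should show that every dropping element has value no larger than all survivors at level $k$, so that the moved tail attaches at the very tail of $\EL'_k$ and preserves both orderings: if a survivor $s$ with $T[s] < T[t]$ preceded a dropping $t$, it would keep $t$ at level $k+1$, a contradiction. I verify this directly at the base level $k=1$ (where survivors are the surviving left-to-right minima, whose minimality forces any smaller-valued predecessor of $t$ to itself survive) and propose to lift it to general $k$ by induction. Granting this structure, the algorithm and cost are immediate: sweeping $k = 1, 2, \ldots$, I locate the split point in $\EL_{k+1}$ by a binary search over its tree against the threshold from $\EL'_k$, split off the dropping tail in $O(\log)$ time, and concatenate it onto $\EL'_k$ in $O(\log)$ time (a legal join, since all its keys follow those of $\EL'_k$ in both orders). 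Each level thus costs a constant number of tree operations, each within the $O(\log\frac{n}{\ell})$ per-level bound already used for Chen et al.'s single-element updates~\cite{ChenCP13}; an empty dropping block terminates the sweep, which touches at most the $\ell$ nonempty lists. Summing gives the claimed $O(\ell\log\frac{n}{\ell})$ bound, and because each position moves at most once (the drop-by-one fact) no second pass is ever needed.

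The main obstacle is the structural induction above: making precise the threshold that defines the dropping tail at each level and proving, simultaneously with the claim that drops form tails, that the concatenation respects the sorted invariant, so that every list-to-list transfer is a single balanced-tree join rather than a general merge. Everything else — the drop-by-one bound, the reduction to one sweep, and the per-level $O(\log\frac{n}{\ell})$ accounting inherited from Chen et al. — is then routine.
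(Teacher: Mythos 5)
Your proposal follows essentially the same route as the paper: split the minimum-value tail off $\EL_1$ (Lemma~\ref{lem:delete_smallest}), then cascade one contiguous block of demoted elements per level using a constant number of split/concatenate operations on the balanced trees --- your ``drop by at most one'' and ``contiguous block'' claims are precisely the structural facts the paper cites from Chen et al.~\cite{ChenCP13} rather than reproving. The one point to tighten is the cost accounting: an individual tree operation at level $k$ costs $O(\log|\EL_k|)$, which for a single level can exceed $\log\frac{n}{\ell}$, so the stated bound really comes from summing $\sum_{k}\log|\EL_k| \leq \ell\log\frac{|T|}{\ell}$ over all levels (the AM--GM step the paper carries out explicitly), not from a uniform per-level bound.
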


\begin{proof}
  Due to Lemma~\ref{lem:delete_smallest}, we can delete
  all the elements of $T$ with minimum value
  from the list $\EL_1$ by splitting the balanced search tree into two,
  in $O(\log |\EL_1|)$ time.

  The rest of our algorithm follows Chen et al.'s approach~\cite{ChenCP13}:
  Note that the split operation on $\EL_1$
  can incur changes to the other lists $\EL_2$, \ldots, $\EL_{\ell}$.
  Let $l'(t)$ be the length of an LIS of $T'[1..t]$ that has an occurrence
  ending at position $t$ in $T'$,
  and let $\EL'_k$ be the list of pairs $\langle t, T'[t] \rangle$
  such that $l'(t) = k$ sorted in increasing order of the first elements $t$.
  Let $\Q_{1}$ be the list of deleted pairs corresponding to the smallest
  elements in $T$,
  and let $\Q_k = \{t \mid l(t) = k, l'(t) = k-1\}$
  for $k \geq 2$.
  Then, it is clear that $\EL'_k = (\EL_k \setminus \Q_k) \cup \Q_{k+1}$.
  Chen et al.~\cite{ChenCP13} showed that
  $\Q_{k+1}$ can be found in $O(\log |\EL_{k+1}|)$ time for each $k$,
  provided that $\Q_{k}$ has been already computed.
  Since $\Q_k$ is a consecutive sub-list of $\EL_k$ (c.f.~\cite{ChenCP13}),
  the split operation for $\EL_k \setminus \Q_k$
  can be done in $O(\log |\EL_k|)$ time,
  and the concatenation operation for $(\EL_k \setminus \Q_k) \cup \Q_{k+1}$
  can be done in $O(\log |\EL_k|+ \log |\EL_{k+1}|)$ time,
  by standard split and concatenation algorithms
  on balanced search trees.
  Thus our batched deletion takes $O(\sum_{1 \leq k \leq \ell} \log |\EL_k|) = O(\log (|\EL_1| \cdots |\EL_{\ell}|))$ time, where $\ell = \LIS(T)$.
  Since $\sum_{1 \leq k \leq \ell}|\EL_k| = |T|$ and
  $\log (|\EL_1| \cdots |\EL_{\ell}|)$
  is maximized when $|\EL_1| = \cdots = |\EL_\ell|$,
  the above time complexity is bounded by $O(\ell \log \frac{|T|}{\ell}) \subseteq O(\ell \log \frac{n}{\ell})$ time.
  \qed
\end{proof}

We are ready to show our main theorem.

\begin{theorem}
  An LSS of a string $S$ can be computed
  in
  $O(r \min\{n, M\} \log \frac{n}{r} + n + M \log n)$ time
  with $O(M)$ space,
  where $n = |S|$, $r = \LSS(S)$, and $M = |\M(S)|$.
\end{theorem}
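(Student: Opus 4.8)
The plan is to turn Lemma~\ref{lem:LSS_to_LCS} and Lemma~\ref{lem:LCS_to_LIS} into a single left-to-right sweep over the $n-1$ partition points, maintaining the dynamic LIS data structure of Problem~\ref{prob:dynamic_LIS}. Concretely, I would process $p$ from $1$ to $n-1$ so that at each $p$ the data structure represents the sequence $T$ associated with the pair $(S[1..p], S[p+1..n])$; then $\LIS(T) = \LCS(S[1..p], S[p+1..n])$ by Lemma~\ref{lem:LCS_to_LIS}. I keep a running maximum of the queried value $\LIS(T)$ over all $p$, and by Lemma~\ref{lem:LSS_to_LCS} the answer $\LSS(S)$ is exactly twice this maximum. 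Since $\LIS(T) = \max\{k : \EL_k \neq \emptyset\}$ can be read off in $O(1)$ from the collection of lists, the query and the running-max update cost $O(1)$ per step.

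Next I would pin down how advancing $p \to p+1$ realizes the two supported operations. Moving position $p+1$ out of the suffix deletes exactly the matching points whose suffix coordinate is $p+1$; these are the points with minimum value currently in $T$, so this is one batched deletion. Moving position $p+1$ into the prefix creates a new matching point $(p+1,j)$ for every $j > p+1$ with $S[j]=S[p+1]$, and the reduction appends these to the right end of $T$, so this is a sequence of right-end insertions. Each matching point of $S$ is inserted once (when its left coordinate enters the prefix) and deleted once (when its right coordinate leaves the suffix), so over the whole sweep there are exactly $M$ insertions and $M$ deleted elements. Using the right-end insertion of Chen et al.~\cite{ChenCP13} at $O(\log|T|) \subseteq O(\log n)$ each, all insertions cost $O(M\log n)$; enumerating $\M(S)$ and building the initial structure fits within $O(n + M\log n)$.

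The crux is the batched-deletion accounting. The key observation is that at every partition point $\LIS(T) = \LCS(S[1..p],S[p+1..n]) \le \max_{1\le p<n}\LCS(S[1..p],S[p+1..n]) = r/2$, so the parameter $\ell$ in Lemma~\ref{lem:batched_deletions} never exceeds $r/2$ throughout the sweep. Hence each batched deletion costs $O(\ell\log\frac{n}{\ell}) = O(r\log\frac{n}{r})$, where I use $\ell \le r/2$ together with the (up-to-constant) monotonicity of $x\mapsto x\log\frac{n}{x}$ in the relevant range. There is one batched deletion per step, hence $O(n)$ in total; those that actually remove an element number at most $\min\{n,M\}$, because there are at most $n$ steps while at most $M$ elements are ever deleted, and the remaining (empty) deletions are detected in $O(1)$ by inspecting the minimum value stored at the tail of $\EL_1$ and contribute only $O(n)$. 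Summing gives $O(r\min\{n,M\}\log\frac{n}{r} + n)$ for all batched deletions.

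Adding the three contributions yields the claimed time bound $O(r\min\{n,M\}\log\frac{n}{r} + n + M\log n)$. For space, the data structure stores the balanced search trees for the lists $\EL_1,\ldots,\EL_\ell$ together with Chen et al.'s structure, whose total size is proportional to $|T|\le M$, giving $O(M)$ space. I expect the main obstacle to be the batched-deletion amortization: justifying that charging each nontrivial deletion the worst-case $O(r\log\frac{n}{r})$ and capping their number by $\min\{n,M\}$ is legitimate, and that the invariant $\ell\le r/2$ indeed lets us replace the instantaneous $\LIS(T)$ by the global parameter $r$. Verifying that the slide exactly matches the ``delete the minimum'' and ``append at the right end'' primitives of Problem~\ref{prob:dynamic_LIS} is the other point that must be checked carefully.
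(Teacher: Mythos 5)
Your proposal is correct and follows essentially the same route as the paper: the same sweep over partition points realizing the slide as one batched deletion plus right-end insertions, the same invariant $\ell \le r/2$ charged against Lemma~\ref{lem:batched_deletions}, and the same counting of $M$ insertions at $O(\log n)$ each and at most $\min\{n,M\}$ non-trivial batched deletions. The only detail the paper adds is recovering the LSS itself (not just its length) by retaining the lists $\EL_k$ at the optimal partition point and back-tracing in $O(r \log \frac{n}{r})$ time (and it spends the additive $n$ on a distinct-characters check rather than on empty deletions), both of which your setup accommodates directly.
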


\begin{proof}
  By Lemma~\ref{lem:LSS_to_LCS} and Lemma~\ref{lem:LCS_to_LIS},
  it suffices to consider the total number of
  insertions, batched deletions, and queries of Problem~\ref{prob:dynamic_LIS}
  for computing an LIS of our dynamic integer sequence $T$.
  Since each matching point in $\M(S)$ is inserted to the dynamic sequence
  exactly once,
  the total number of insertions is exactly $M$.
  The total number of batched deletions is bounded by the number $n-1$ of
  partition points $p$ that divide $S$ into $S[1..p]$ and $S[p+1..n]$.
  Also, it is clearly bounded by the number $M$ of matching points.
  Thus, the total number of batched deletions is at most $\min\{n, M\}$.
  We perform queries $n-1$ times for all $1 \leq p < n$.
  Each query for $\LIS(T)$ can be answered in $O(1)$ time,
  by explicitly maintaining and storing the value of $\LIS(T)$
  each time the dynamic integer sequence $T$ is updated.
  Thus, it follows from Lemma~\ref{lem:batched_deletions} that
  our algorithm returns $\LSS(S)$ in
  $O(r \min\{n, M\} \log \frac{n}{r} + M \log n)$ time.
  By keeping the lists $\EL_{k}$ for a partition point $p$ that gives $2 \ell = r = \LSS(S)$,
  we can also return an LSS (as a string) in $O(r \log \frac{n}{r})$ time,
  by finding an optimal sequence elements from $\EL_\ell$, $\EL_{\ell-1}$, \ldots, $\EL_{1}$.
  The additive $n$ term in our $O(r \min\{n, M\} \log \frac{n}{r} + n + M \log n)$ time complexity is for testing whether the input string $S$ consists of $n$ distinct characters (if so, then we can immediately output $r = 0$ in $O(n)$ time).

  The space complexity is clearly linear in the total size of the lists $\EL_{1}, \ldots \EL_\ell$,
  which is $|T| \in O(M)$.
  \qed
\end{proof}

When $r = o(n (\log \log n)^2 / \log^3 n)$ and $M = o(n^2 (\log \log n)^2 / \log^3 n)$,
our algorithm running in $O(r \min\{n,M\} \log \frac{n}{r} + n + M \log n)$ time outperforms
Tiskin's solution that uses $O(n^2 (\log \log n)^2 / \log^2 n)$ time~\cite{Tiskin13}.
The former condition $r = o(n (\log \log n)^2 / \log^3 n)$
implies that our algorithm can be faster than Tiskin's algorithm
(as well as Kosowski's algorithm~\cite{Kosowski04})
when the length $r$ of the LSS of the input string $S$ is relatively short.
For uniformly distributed random strings of length $n$ over
an alphabet of size $\sigma$, we have $M \approx n^2 / \sigma$.
Thus, for alphabets of size $\sigma = \omega(\log^3 n / (\log\log n)^2)$,
the latter condition $M = o(n^2 (\log \log n)^2 / \log^3 n)$ is likely to be
the case for the majority of inputs.

\section*{Acknowledgments}
This work was supported by JSPS KAKENHI Grant Numbers
JP17H01697 (SI),
JP20H04141 (HB),
and JST PRESTO Grant Number JPMJPR1922 (SI).

\bibliographystyle{abbrv}
\bibliography{ref}

\end{document}